\begin{document}

\title{
A Joint Energy and Latency Framework for Transfer Learning over 5G Industrial Edge Networks}

\author{

\IEEEauthorblockN{Bo~Yang,~\IEEEmembership{Member,~IEEE},
 Omobayode Fagbohungbe, Xuelin Cao, Chau Yuen,~\IEEEmembership{Fellow,~IEEE}, 
  Lijun Qian,~\IEEEmembership{Senior Member,~IEEE}, Dusit Niyato,~\IEEEmembership{Fellow,~IEEE}, and Yan Zhang,~\IEEEmembership{Fellow,~IEEE}}

 \thanks{B. Yang, X. Cao, and C. Yuen are with the Engineering Product Development Pillar, Singapore University of Technology and Design, Singapore 487372 (e-mail: bo$\_$yang, xuelin$\_$cao, yuenchau@sutd.edu.sg).}%
\thanks{Omobayode Fagbohungbe and Lijun Qian are with the Department of Electrical and Computer Engineering and CREDIT Center, Prairie View A$\&$M University, Texas A$\&$M University System, Prairie View, TX 77446, USA (email: ofagbohungbe@student.pvamu.edu, liqian@pvamu.edu). }
\thanks{D. Niyato is with the School of Computer Science and Engineering, Nanyang Technological University, Singapore (e-mail: dniyato@ntu.edu.sg).}
\thanks{ Y. Zhang is with University of Oslo, Norway; and also
with Simula Metropolitan Center for Digital Engineering, Norway (e-mail:
yanzhang@ieee.org).}
}

\maketitle

\begin{abstract}
In this paper, we propose a transfer learning (TL)-enabled edge-CNN framework for 5G industrial edge networks with privacy-preserving characteristic. {In particular, the edge server can use the existing image dataset to train the CNN in advance, which is further fine-tuned based on the limited datasets uploaded from the devices. With the aid of TL, the devices that are not participating in the training only need to fine-tune the trained edge-CNN model without training from scratch.}
Due to the energy budget of the devices and the limited communication bandwidth, a joint energy and latency problem is formulated, which is solved by decomposing the original problem into an uploading decision subproblem  and a wireless bandwidth allocation subproblem. {Experiments using ImageNet demonstrate that the proposed TL-enabled edge-CNN framework can achieve almost $85\%$ prediction accuracy of the baseline by uploading only about $1\%$ model parameters, for a compression ratio of $32$ of the autoencoder.}

\begin{IEEEkeywords}
Industrial IoT, transfer learning, convolutional neural network, 5G, multi-access edge computing
\end{IEEEkeywords}
\end{abstract}

\section{Introduction}
\IEEEPARstart{I}{nternet}-of-Things (IoT) has become encouraging by connecting enormous devices, such as sensors, CCTV, smart glasses, and autonomous vehicles. As a subset of IoT and an emerging domain, the Industrial IoT (IIoT) inter-connects the humongous amount of smart industrial units and equipment capable of environmental monitoring, data collection, and processing~\cite{IIoT_editor2}. As reported by IBM, the amount of connected-equipment will balloon to 50 billion, generating over 40 trillion Gbits data by 2020~\cite{5G}. Different from the conventional cellular networks (e.g., 3G and 4G), the next-generation wireless communication, also known as (a.k.a.) the fifth-generation (5G), is expected to provide a connection for a large volume of IIoT devices with ultra-low latency yet high reliability ubiquitously. 

During recent years, the exponential growth of promising technologies such as artificial intelligence (AI) will usher in unprecedented paradigm shifts from connected devices to connected intelligence~\cite{5G_AI,yc_its}. In this context, it is envisioned that future IIoT services will emerge intelligently through hyper-connectivity involving 5G and AI, which will pave the way for enabling efficient automation production by interconnecting isolated industrial assets. As a crucial technique achieving image classification~\cite{AlexNet}, 
 convolutional neural network (CNN) has significantly promoted developments of visual processing tasks such as object detection~\cite{{object_detection01},{object_detection02},{object_detection03}}. By feeding the collected images into the well trained CNN model, the spatial and temporal dependencies of data can be captured through the application of relevant filters.
 
Since relatively high computational complexity is introduced during the CNN model's training based on the big dataset, multi-access edge computing (MEC) becomes promising in IIoT by reducing the devices' computation burdens by ``moving" the computation tasks to the edge server~\cite{MEC_editor1_01,yc_tii}. Specifically, with the help of MEC, the raw images data collected by the devices can be offloaded to the MEC server (MES) co-located with the base station (BS) for CNN training, which, however, will lead to the data privacy concern and introduce considerable training delay, especially when the edge server is overwhelmed by the computing load~\cite{yb_wcm}. {Such issues may become particularly critical in some applications such as smart grid IoT system, where the smart meter users with sensitive data are usually densely deployed and thus incur high computation and communication costs. To achieve individual privacy, a fog-enabled privacy-preserving data aggregation scheme (FESDA) was proposed to encrypt the metering data and send to the control center~\cite{FESDA}. }

Generally, to achieve the privacy-preserving during the model training, federated learning proposed by Google allows the end-devices to collaboratively train a shared learning model while preserving all training data on their own sides~\cite{FL_google}. By joint optimizing the computing and communication resources such as the global aggregation frequency, the federated learning performance can be improved~\cite{{FL_jsac1},{FL_jsac}}. While interesting, Zhu  \textit{et al}.~\cite{privacy} stated that this kind of implementations may still suffer from leaking gradient problem associated with federated learning. This ``leaking gradient problem" can be avoided with autoencoder, which protects data privacy without the necessity of encryption~\cite{autoencoder}. Moreover, in real-world IIoT applications, the training dataset and the dataset in the target domain may not be within the same feature space {or follow a different data distribution~\cite{TL_distribution1}. For instance, the input spaces may be different between the source and the target tasks~\cite{TL_distribution2}.} To address this problem, the transfer learning (TL, a.k.a. knowledge transfer) has emerged as a promising learning method in which the knowledge learned from the previously developed models in the {source domain} can be leveraged in some related training tasks to decrease the model convergence time in the {target domain}~\cite{TL_survey}.

\textit{Our Vision:} {In this vein, AI functioning on edge will play a crucial role in image classification and become a general trend~\cite{Chen_ioffloading}, and how to improve the CNN performance in 5G edge IIoT era with data privacy-preserving is the focus of this article.} This work aims to provide a comprehensive TL-enabled edge-CNN framework with joint consideration of model training and fine-tuning, privacy-preserving, energy consumption and latency involved:

\begin{itemize}
\item \textit{A CNN model is trained or fine-tuned at the edge server to ease the computation burdens on the devices via TL}. {In this paper, transfer learning refers to the method of fine-tune a pretrained deep learning model with new data that may have different distribution. During the CNN training, the dataset collected or generated by the devices is prone to be kept locally due to the data privacy. As a result, the labeled data may be too limited for the edge server to train a CNN from scratch. One promising way to address this issue is TL, which enables the edge server to use the existing image dataset (e.g., ImageNet) to train a CNN in advance. Then the trained CNN can be fine-tuned based on the limited datasets uploaded from the devices to achieve reasonable inference performance. Notably, the proposed framework is applicable for both cases, and this paper assumes that CNN is trained from scratch at the edge server.}

\item \textit{Each device trains an autoencoder and uploads the latent vectors to the edge server for privacy-preserving}. 
{Instead of
reduction of the data redundancy using traditional compression methods, autoencoder can extract critical features from the raw data and encode the features into the latent vector. Specifically, in addition to data compression, autoencoder can also encrypt the data by transforming the raw data into latent vectors, which further enhance the security of data since an adversary could not reconstruct the raw data from the latent vector without knowing the exact structure and weights of the pre-trained autoencoder~\cite{auto}.
Compared to the differential privacy~\cite{ICLR} and secure aggregation~\cite{secure_aggregation}, the proposed framework can protect data privacy by transmitting latent vectors without introducing additional cost of encryption or secret sharing.}
\end{itemize}

Specifically, since the distribution of data collected at different IIoT devices is usually non-independent-and-identically-distributed (non-i.i.d.), it is foreseeable that the devices with various data sources are prone to participate in the CNN training to improve the image classification performance. While impressive, the more devices involved the CNN training (i.e., the more devices need to upload their latent vectors to BS), then less wireless bandwidth can be allocated for each of the devices. This leads to a \textit{wireless transmission bottleneck} in the proposed edge-CNN framework. With the aid of TL, the devices not participating the edge-CNN training 
should fine-tune the trained CNN model before the local inference. Since the IIoT devices are usually energy-constrained, we note that performing fine-tuning at devices should be avoided if it costs too much energy. In this context, the BS has to carefully select the devices participating in the CNN training and allocate appropriate wireless bandwidth for the involved devices to minimize the computing latency and communication latency.

\textit{Contributions and Organization:}  To achieve efficient image classification via CNN training with transfer learning 
in 5G-driven IIoT ecosystems, inherent concerns associated with energy-constrained computing and secure communication under limited bandwidth will present stumbling blocks toward the envisioned goals (e.g., real-time data analytic).
In this paper we explore the challenges raised by investigating the intertwined relationships among the involved energy consumption, completion latency, and the CNN training quality. {To the best of our knowledge, this is the \textit{first} effort to investigate and optimize the CNN training and inference via TL for energy-constrained IIoT edge networks under limited wireless bandwidth.}  The specific contributions and findings of this paper include:

 \begin{itemize}
\item {\textbf{TL-enabled Edge-CNN Framework:}} We develop an edge-CNN framework with privacy-preserving via TL over 5G-aided IIoT networks. Specifically, to achieve privacy-preserving, the device trains the autoencoders individually and compresses the image into the latent vectors that usually contain critical features learned from the raw images~\cite{qian_conference}. Then the latent vectors and the corresponding labels will be uploaded to the edge server for CNN training. On receiving the latent vectors and labels, the edge server trains the edge-CNN model and broadcasts the well-trained CNN model to all the devices. With the aid of TL, the devices that are not participating in the training only need to fine-tune the received edge-CNN model with their datasets without training from scratch. This indicates that the more devices participating in the training process, the better generalization of the edge-CNN can be achieved.

\item \textbf{Energy-Latency-and-Learning Tradeoffs:} 
The \textit{energy-latency-and-learning} tradeoffs consists of two parts, namely \textit{energy-learning} tradeoff and \textit{latency-learning} tradeoff. 
These two tradeoffs demonstrate the relationship between the device's energy consumption and completion latency achieving image classification, and the CNN learning performance. Specifically, the larger compression ratio (leading to the worse inference accuracy) will lead to less communication latency and energy consumption involved since more wireless bandwidth becomes available. Since the latency involved before the CNN training consists of two parts: computing latency and communication latency, 
 they form another tradeoff, namely \textit{computing-communication} latency tradeoff, as highlighted in Fig.~\ref{model}(b). That is to say, only when all devices finish the local computing (i.e., compress the raw images into the latent vectors) and wireless communication (i.e., upload the latent vectors to the BS) simultaneously, the ``straggler effect" issue that is described in \textbf{Remark~\ref{R2}} can be avoided.

\item \textbf{Joint Energy-Latency Optimization:} 
To trace the energy-latency-and-learning tradeoffs, in our proposed approach we fix the learning metric (i.e., the compression ratio) and then solve for the energy-latency problem. Specifically, considering that the IIoT devices are usually energy-constrained and wireless bandwidth is limited, we formulate a joint energy and latency optimization problem to find an optimal uploading decision and wireless bandwidth allocation minimizing the weighted-sum cost. 
We decompose the original problem into two subproblems and derive the solution, 
 which suggests allocating more bandwidth to the involved devices with worse channel conditions or weaker computation capabilities.

\item \textbf{Performance Evaluation:} 
Experiments using ImageNet dataset demonstrate the superiority of the proposed edge-CNN framework in terms of CNN performance (including the testing accuracy and training time) and normalized weighted-sum cost.

\end{itemize}

The rest of the paper is organized as follows. We start by reviewing the related works 
in Section~\ref{related_works}. Section~\ref{framework} describes  the proposed framework and system model in detail. In Section~\ref{optimization}, we present the problem formulation, followed by the problem decomposition achieving the optimal latent vector uploading decision and wireless bandwidth allocation. Experimentation results are illustrated in Section~\ref{result}. Conclusions is finally drawn in Section~\ref{concusion}.

\section{Related Works}
\label{related_works}
Some previous works dealt with challenges on the model training (e.g., CNN) from various perspectives. Here, we analyze and distinguish our work from these existing works.
\subsection{{CNN Training Architecture Design}}
During the past decade, CNN based sophisticated learning architectures such as ImageNet~\cite{GoogleNet} have achieved significant progress in image classification. To improve the performance of the conventional CNN training architecture, some researchers aim at combining the CNN architecture with the advanced deep learning models~\cite{{advanced01},{advanced02}}. 
For example, Wang \textit{et al}.~\cite{advanced01} combined CNN with the recurrent neural network (RNN) for multi-label image classification. Krizhevsky \textit{et al.} \cite{advanced02} 
combined CNN with autoencoder to achieve face rotation and intrinsic transformations for objects. On another note, some works focused on improving CNN training by modifying the current CNN architectures for some specific tasks~\cite{vnet}. 
In real-world applications, the training dataset and the dataset in the future target domain may not be within the same feature space but follow different distributions.
 By exploring the TL advantage, the training efficiency can be significantly improved by avoiding expensive data labeling efforts. In this context, introducing TL into CNN training attracts extensive attention and achieves great success in different tasks such as image recognition and object detection~\cite{{TL02},{TL03},{TL04}}. For example, 
 Oquab \textit{et al.}~\cite{TL02} designed a method to reuse ImageNet-trained layers of CNN to compute mid-level image representation for images in a different dataset. To bridge the chasm between image classification and object detection, Girshick \textit{et al.}~\cite{TL03} proposed a CNN-based detection algorithm, called ``Regions with CNN features" (R-CNN), that was composed of ``supervised pre-training" and ``domain-specific fine-tuning".
In~\cite{TL04}, Shin \textit{et al.} verified that the TL from pre-trained ImageNet CNN models could be valuable on the medical image datasets.

\subsection{Transfer Learning over Wireless Networks}
 With the successes of MEC applied in the 5G era,  considerable attention has been paid toward improving the TL performance with the aid of wireless communication and computing technologies. For example, some existing works, {\cite{TL_wireless01,TL_wireless02, TL_wireless03, TL_wireless04, TL_wireless05, TL_wireless06, TL_wireless07},} investigated the issues related to the implementation of TL over wireless networks. Specifically, Chen \textit{et al.}~\cite{TL_wireless01} designed a task allocation scheme assigning tasks to edge devices according to their importance to maximize the overall decision performance.  Ba{\c{s}}tu{\u{g}} \textit{et al.}~\cite{TL_wireless02} proposed a transfer learning-based caching procedure by exploiting the prior information, which is then incorporated in the target domain to cache strategic contents at the small cells optimally. {Shen \textit{et al.}~\cite{TL_wireless03} presented a TL method via adapting a pre-trained machine learning model to the new task with only a few unlabeled training samples for resource allocation in wireless networks. Liu \textit{et al.}~\cite{TL_wireless04} investigated a deep transfer learning approach to
address tag signal detection problem for AmBC systems, by transferring the knowledge learned from one tag detection task under the offline channel coefficients to another different but related tag detection task
in real-time. Furthermore, Yuan \textit{et al.}~\cite{TL_wireless05} introduced TL to achieve fast adaptation with the limited new labeled data on the beamforming design when the distribution of testing wireless environments changes. To improve the robustness of spectrum sensing, Peng \textit{et al.}~\cite{TL_wireless06} proposed to incorporate TL to adapt the learned models to new communications settings. In~\cite{TL_wireless07}, a TL-based framework was proposed for rotor-bearing system fault diagnosis
under varying working conditions to solve the problem of limited available training data in the target domain.}

Notably, we found that in the current literature, the devices usually need to train a shallow model and transfer specific layers or even the whole model to the target tasks at the edge server or cloud. 
 This procedure may still suffer from leaking data privacy problem, and pose a significant burden on the power-constrained IIoT devices as well as wireless communications since a large volume of knowledge is involved. As a result, these prior works are not applicable in the practical 5G-aided IIoT networks since devices' differential computation capability (such as the energy, storage space, and CPU resource) and the dynamics of the wireless channels were not considered yet. With the implementation of the TL algorithms on the MEC setup, the CNN training will encounter various energy and delay, which will, in turn, affect the CNN training performance.

\section{Proposed Edge-CNN Framework}
\label{framework}
In this section, we first introduce the proposed framework and the critical training principles, followed by the detailed energy  model and latency model.

\begin{figure}[t] \vspace{-0.0cm} \hspace{-0.0cm}
            \centering
            \captionsetup{font={footnotesize }}
            \includegraphics[width=4.5in, height=3.5in]{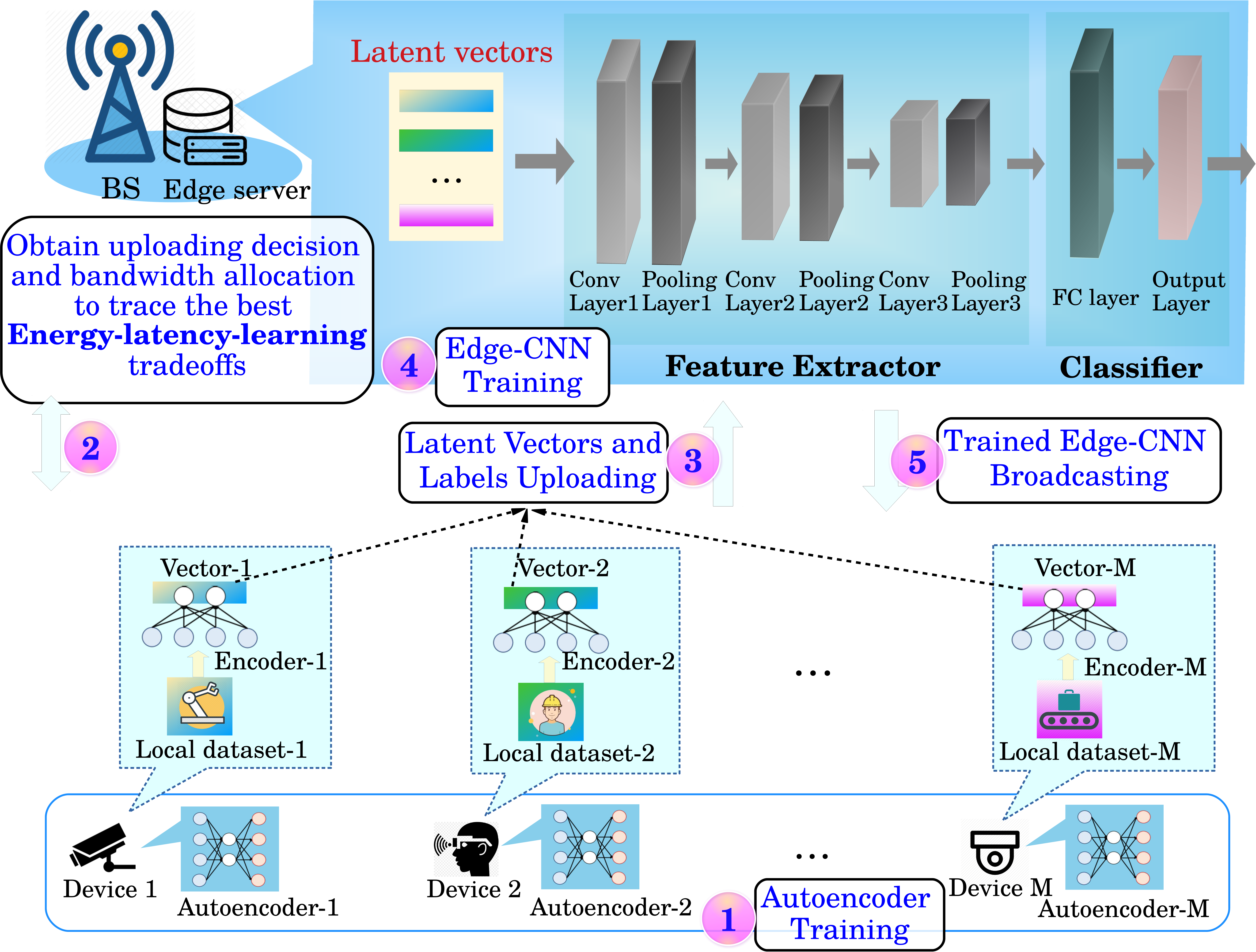} 
             \caption{An illustrative example of the proposed framework for image classification, where there exists $M$ IIoT devices. The edge-CNN classifier layers contain a mix of convolution, max-pooling and fully connected layers. } 
            \label{fig:1}
\end{figure}\vspace{-0.0cm} \vspace{-0.0cm}

\subsection{Overview of Proposed Framework}
 A general structure of the proposed framework is highlighted in Fig.~\ref{fig:1}. Suppose that there are total $ M $ IIoT devices, each device contains embedded sensors such as video cameras and embedded computing devices such as NVIDIA Jetson Nano, while the edge server has strong computational resources and large storage. The devices set is given by ${\cal M}=\{D_1, D_2, ..., D_M\}$, where $D_i$ indicates the $i$-th IIoT device. Specifically, the proposed framework includes the following five steps: 
\begin{itemize}
\item \textbf{Autoencoder Training.} To protect data privacy and reduce communications overhead, each device separately trains an autoencoder in an unsupervised way to ``encrypt" the raw data into latent vectors. The trained encoder functions can extract the critical features into the latent vector from the input data, and ensure dimension reduction of the input data with a compression ratio to improve the wireless transmission efficiency.

\item \textbf{Uploading Decision and Bandwidth Allocation.} The edge server calculates the optimal uploading decision and bandwidth allocation vector according to some critical parameters (such as devices' computation capability, transmission power, channel condition, and latent vector size) and then notifies the devices. 

\item \textbf{Latent Vectors and Labels Uploading.} The devices upload their latent vectors and labels\footnote{We assume that the collected image data has been labeled well with the image labeling tools such as LabelMe~\cite{labelme}, and then correlated to the corresponding latent vectors.} to the edge server with the allocated bandwidth. 

\item \textbf{Edge-CNN Training.} On receiving the latent vectors and labels from the devices, the edge server trains the edge-CNN model until it converges. 

\item \textbf{Trained Edge-CNN Broadcasting and Local Inferece.} The BS broadcasts the well-trained edge-CNN model to all the devices enabling them to perform local inference (i.e., classifying objects from the collected images) independently. 
\end{itemize}

{In this paper, TL functions in the proposed edge-CNN framework in the following two aspects. In particular, when the labeled data is too limited for the edge server to train a CNN from scratch, the edge server uses the existing image dataset to train the CNN in advance, which can be further fine-tuned at the edge server based on the limited datasets uploaded from the devices. On another note, with the aid of TL, the devices that are not participating in the training procedure just need to fine-tune the received edge-CNN model with their local datasets without training from scratch.}

\subsection{Training Principles}
Next, we detail critical training principles including the training on autoencoder and training on the edge-CNN model.

\begin{figure}[t] 
\centering
\captionsetup{font={footnotesize }}
\subfigure[]{
\includegraphics[width=2.25in,height=1.65in]{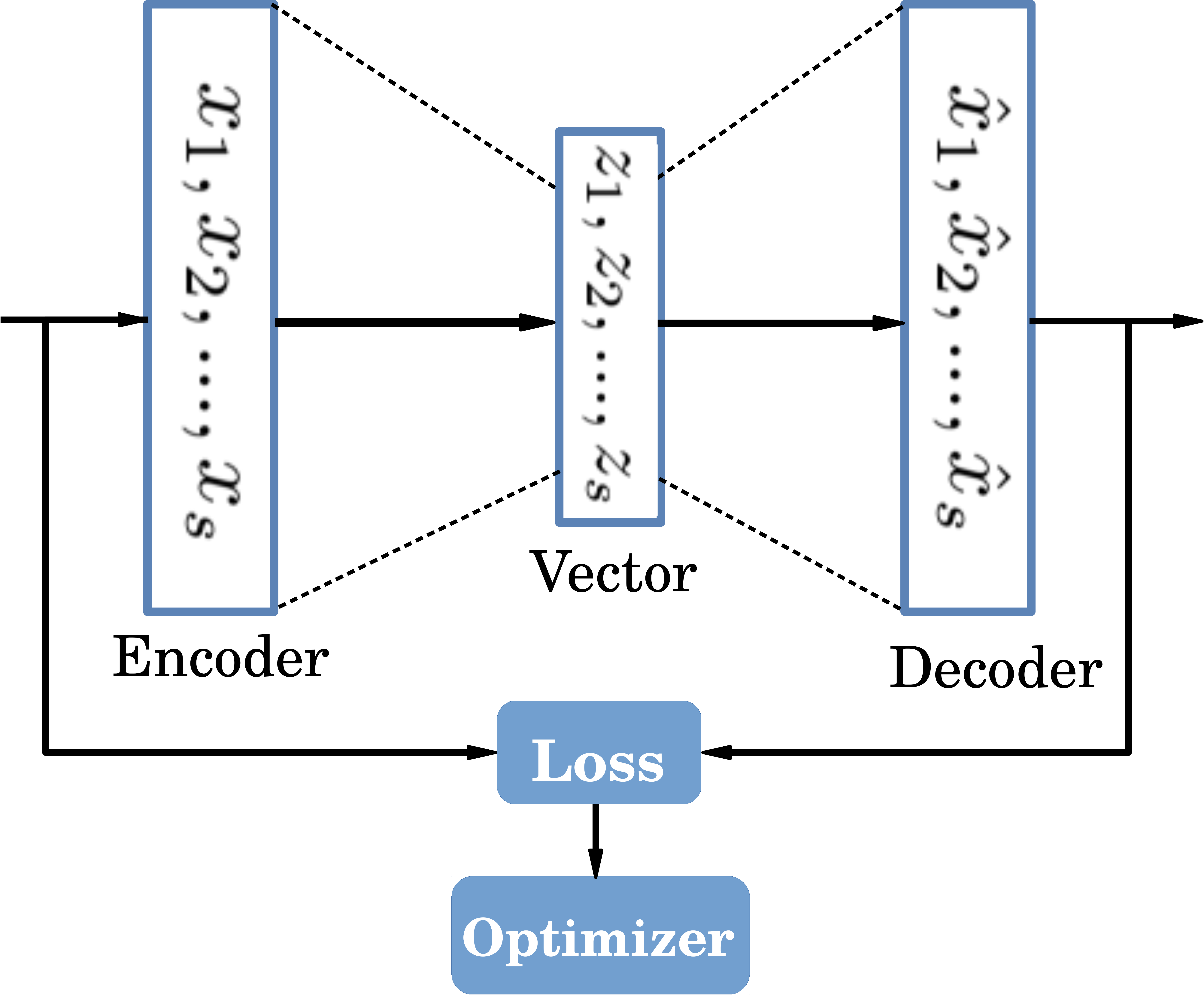}}
\hspace{1.5mm} 
\subfigure[]{
\includegraphics[width=2.65in,height=1.65in]{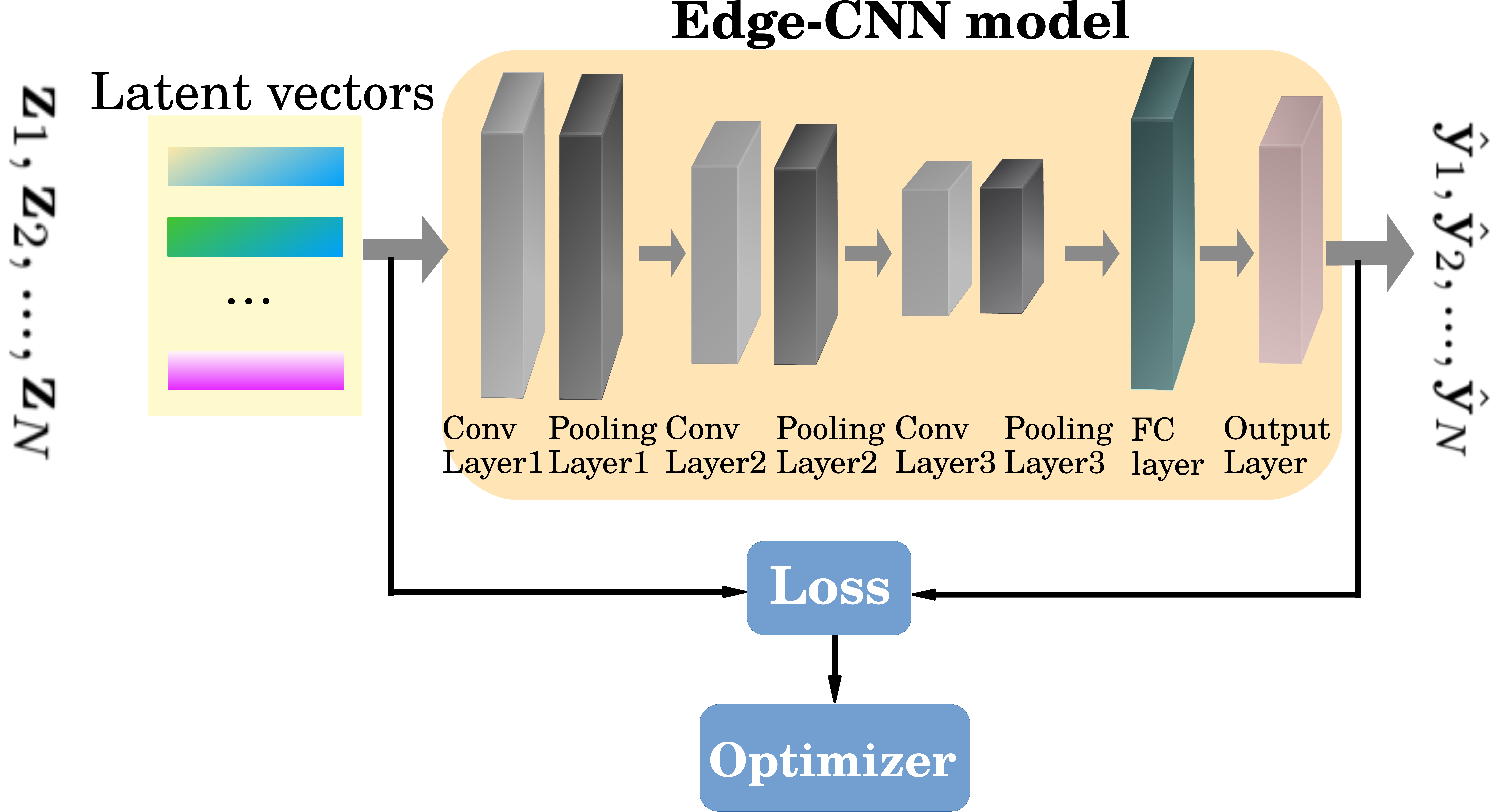}}
\caption{{Training of the autoencoder is shown in (a), where $s$ indicates the number of data samples. The encoder compresses the input data samples $\{x_1, x_2, ..., x_s\}$ into latent vector $\{z_1, z_2, ..., z_s\}$, which can be reconstructed as  $\{\hat{x}_1, \hat{x}_2, ..., \hat{x}_s\}$. Training of the edge-CNN model is shown in (b), where the input is the latent vectors $\{ {\bf z}_1, {\bf z}_2, ..., {\bf z}_N \}$ and the labels are denoted as  $\{ \hat{{\bf y}}_1, \hat{{\bf y}}_2, ..., \hat{{\bf y}}_N \}$.}}
\label{training}
\end{figure}  

\subsubsection{Training on Autoencoder}
The autoencoder deployed at each device is a type of artificial neural network used to learn efficient data codings and reconstruct the original input in an unsupervised way. The components of an autoencoder and its training process is shown in Fig.~\ref{training}(a).
 It is composed of an encoder and a decoder~\cite{autoencoder}. The encoder compresses the input data samples, $\{x_1, x_2, ..., x_s\}$, into a low dimensional representation of pre-determined size, called the latent vector. Suppose that the latent vector is denoted by ${\bf z}=\{z_1, z_2, ..., z_s\}$ that contains the essential features of the data, where $s$ indicates the number of data samples. At the edge server, the decoder tries to reconstruct the original data, $\{\hat{x}_1, \hat{x}_2, ..., \hat{x}_s\}$, from the latent vector ${\bf z}$. {Note that instead of uploading the raw image data, with the well-trained autoencoder, only the compressed latent vector that contains critical features of raw data will be uploaded to the edge server. Since an autoencoder is a lossy network, the raw data may not be fully recovered due to the compression ratio.} Each device trains the autoencoder by minimizing the differences between the original input (i.e., raw image data) and the reconstructed input. 
The autoencoders were trained from scratch using Glorot uniform
method 
 as initializer, mean square error 
 as the loss function and RMSprop optimization 
  algorithm as the optimizer.

 \begin{figure*}[t] 
\centering
\captionsetup{font={footnotesize }}
\subfigure[]{
\includegraphics[width=2.1in,height=1.68in]{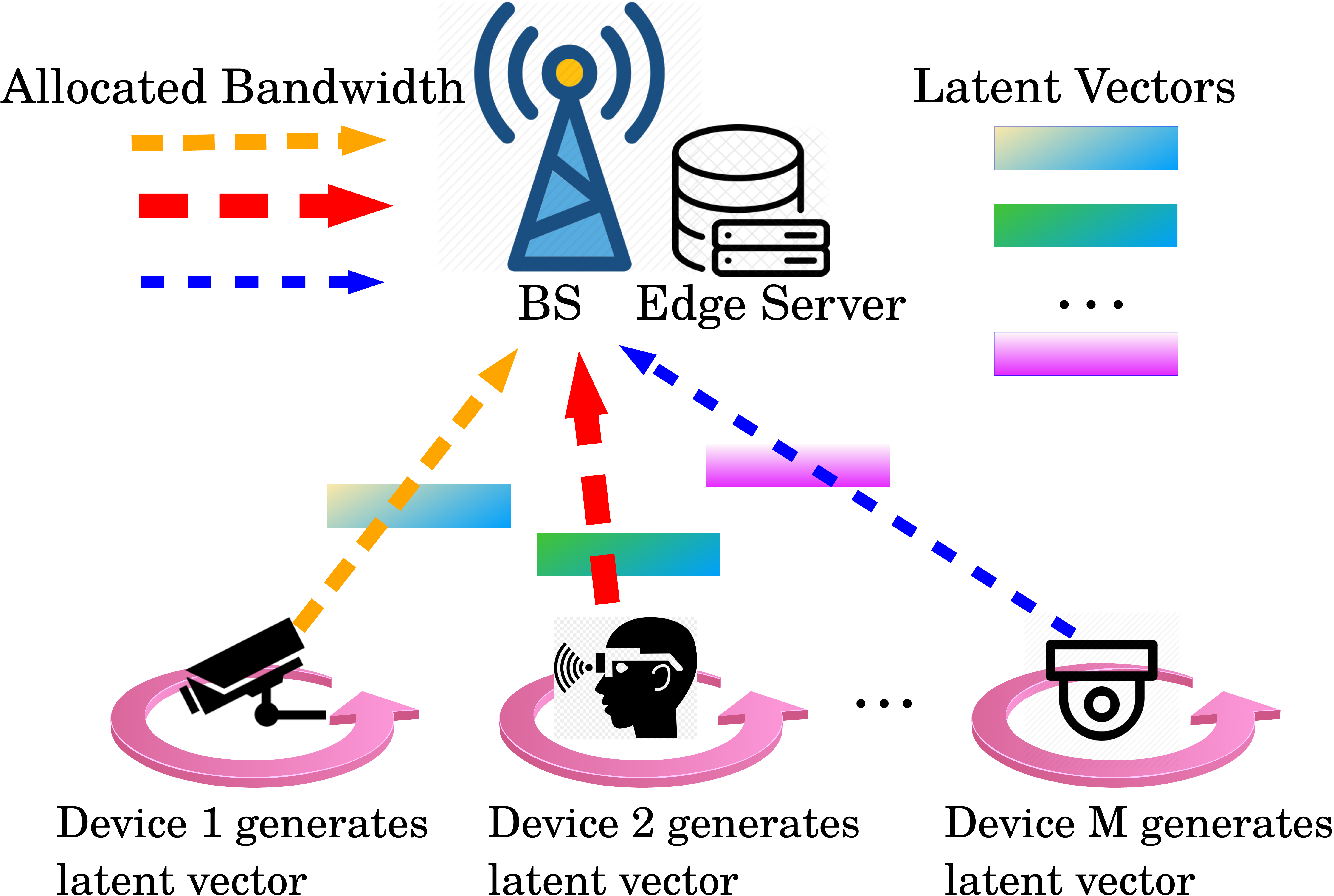}}
\hspace{-0.1in}
\subfigure[]{
\includegraphics[width=2.1in,height=1.68in]{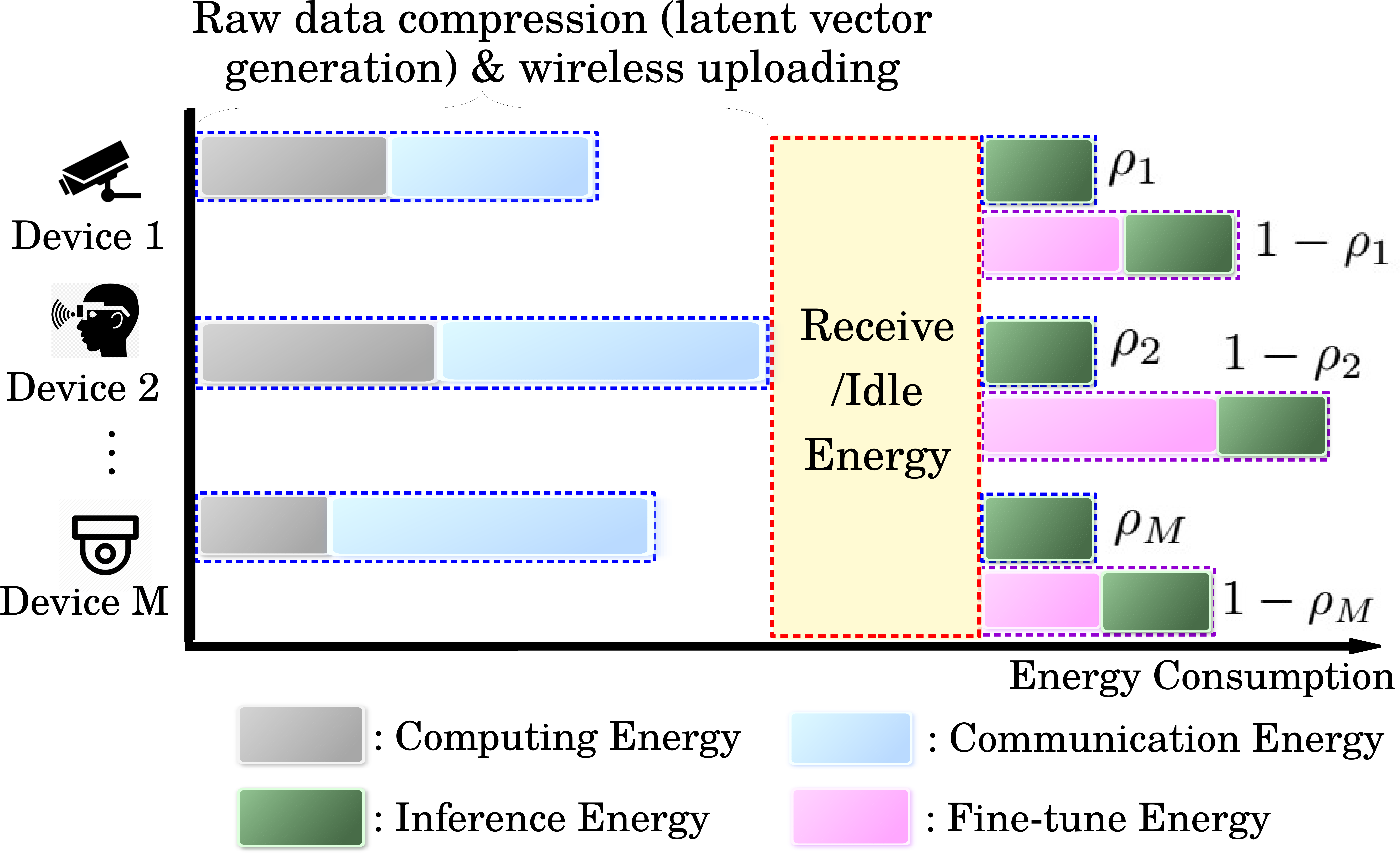}}
\hspace{-0.05in}
\subfigure[]{
\includegraphics[width=2.1in,height=1.68in]{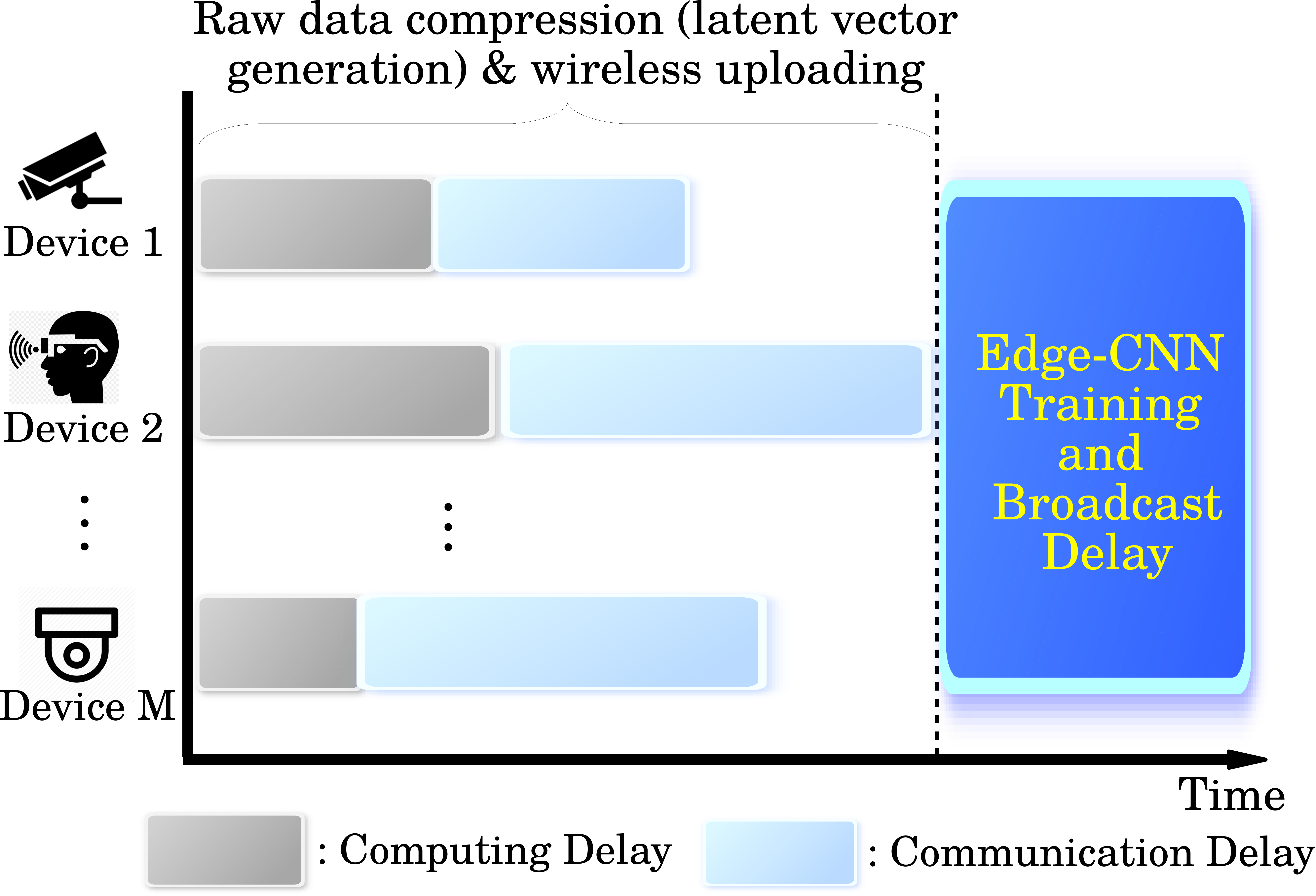}}
\hspace{-5mm}
\caption{{An illustrative example of the proposed framework for latent vectors and labels uploading via the allocated bandwidth is shown in (a), where the thickness of the arrow indicated the allocated bandwidth. Energy consumption achieving image classification of each device is illustrated in (b), where we consider the non-iid data setting and the $i$-th device participates in the edge-CNN training with a binary decision $\rho_i$. Involved computing and communication latency before the edge-CNN training is illustrated in (c), where the edge-CNN training process is susceptible to the ``straggler effect".}} 
\label{model}
\end{figure*} 
 
{After the autoencoder is well-trained, the encoder part is extracted and deployed on the device to infer the latent vector accordingly.  Denote the trained autoencoder of the $i$-th device as ${\cal A}_i$, which is able to extract critical features in the raw data samples and encrypt the data by transforming the raw data into latent vectors ${\bf z}$ to upload to the edge server for privacy-preserving. This is because an adversary could not reconstruct the raw data samples from the latent vector without knowing the structure (e.g., number of layers and number of nodes in each layer) and all the weights of the trained autoencoder (e.g., the decoder part). Even if the device is captured by an adversary, it is still very difficult for the adversary to deduce the decoder part from the encoder part deployed  on the device.}

\subsubsection{Training on Edge-CNN}
In this stage, the locally-trained encoder is extracted and deployed at each device in the inference mode. For the $i$-th device, the collected dataset ${\bf x}_i=\{x_1, x_2, ..., x_s\}$ is fed to the corresponding pre-trained encoder, which transforms the input data ${\bf x}_i$ into a latent vector ${\bf z}_i$ which represents the critical features of ${\bf x}_i$. In this context, the edge-CNN model is trained based on the latent vectors (i.e., the input to the edge-CNN) and the corresponding labels ${\bf y}_i$ (i.e., the output of the edge-CNN), which are uploaded by the devices, as illustrated in Fig.~\ref{training}(b). Specifically, we conventionally use the back-propagation and set the loss function of the CNN classification as cross-entropy. 
 We use Adams optimization algorithm as the
optimizer and data augmentation was further used during the training process to mitigate overfitting.  The probability of each output class is predicted using the Softmax function, and the predicted probability for the $c$-th class is given as
$\sigma_c(k) =\frac{e^{k_c}}{\sum_{j=1}^{M} e^{k_j}}$, where $ c=1,...,C,$ 
 $C$ is the total number of classes, $k$ is the output of the last fully connected layer.
 During the training on edge-CNN, the key design metric is the number of devices transferring their latent vectors to the edge server and how to allocate bandwidth to them, as highlighted in \textbf{Remark~\ref{R1}}.

\theoremstyle{Remark}
\newtheorem{remark}{\textbf{Remark}}
\begin{remark} 
\label{R1}
Since each device may have unique training data samples which are usually non-iid, the edge server generally prefers to include more latent vectors from devices. On another note, due to the energy constraint,                                                                                                                                                                                                                                                                                                                                                                                                                                                                                                                                                                                                                                                                                                                                                                                                                                                                                                                                                                                                                                                                                                                                                                                                                                                                                                                                                                                                                                                                                                                                                                                                                                                                                                                                                                                                                                                                                                                                                                                                                                                                                                                                                                                                                                                                                                                                                                                                                                                                                                                                                                                                                                                                                                                                                                                                                                                                                                                                                                                                                                                                                                                                                                                                                                                                                                                                                                                                                                                                                                                                                                                                                                                                                           only a subset of devices may upload their latent vectors to the BS. Considering the various computation capability of each device and the channel conditions, the BS should allocate appropriate wireless bandwidth for these devices to minimize the completion latency. 
\end{remark}

\subsection{Energy Model}
To illustrate the device's energy consumption, we first define the uploading decision as follows.
\theoremstyle{Definition}
\newtheorem{definition}{\textbf{Definition}}
\begin{definition} 
\label{D1}
\textbf{Uploading Decision:} The uploading decision of $D_i$ (denoted as $\rho_i \in \{0, 1\}$) is defined as the \textbf{binary} decision indicating whether $D_i$ participates in the edge-CNN training or not, i.e., $D_i$ uploads its latent vector and labels to the BS when $\rho_i=1$, and vice versa. 
\end{definition} 

Based on \textbf{Definition~\ref{D1}}, we assume that the devices upload the latent vectors and labels to the BS via orthogonal-access schemes such as orthogonal frequency division multiple access (OFDMA), as illustrated in Fig.~\ref{model}(a).

\subsubsection{Energy Consumption with Vectors Uploading}
In this case, the device's energy consumption includes five parts: ``local data encoding", ``latent vector transmission", ``waiting for edge-CNN training", ``trained edge-CNN model downloading", and ``local inference", as illustrated in Fig.~\ref{model}(b). In this case, the total energy consumption of $D_i$ with vectors uploading is calculated as
\begin{equation}\label{energy_upload}
E^i_{U^+} = e^i_{encode} + e^i_{trans} + e^i_{idle} + e^i_{down} + e^i_{inf}, \ \ \forall i \in \cal M.
\end{equation}

\subsubsection{Energy Consumption without Vectors Uploading}
Unlike the previous case, the device's energy consumption in this case also includes five parts: ``local data encoding", ``waiting for edge-CNN training", ``trained edge-CNN model downloading", ``edge-CNN model fine-tuning", and ``local inference". In this case, the total energy consumption of $D_i$ without vectors uploading is
\begin{equation}\label{energy_w/o_upload}
E^i_{U^-} = e^i_{encode} + e^i_{idle} + e^i_{down} + e^i_{tune} + e^i_{inf}, \ \ \forall i \in \cal M.
\end{equation}

To this end, total energy consumption of $D_i$ is obtained as
\begin{equation}\label{energy_total}
E_i = \rho_i E^i_{U^+} + (1-\rho_i) E^i_{U^-}.
\end{equation}

Substituting (\ref{energy_upload}) and (\ref{energy_w/o_upload}) into (\ref{energy_total}), we have
\begin{equation}\label{energy_total_new}
E_i = \rho_i \Delta_e^i + E^i_{U^-},
\end{equation}
where $\Delta_e^i =  e^i_{trans}  - e^i_{tune}$.

\subsection{Latency Model}
Here, we analyze the computing latency and communication latency before the edge-CNN learning, as illustrated in Fig.~\ref{model}(c). We denote the total number of devices participating the edge-CNN training as $N= \sum_{i=1}^{M} \rho_i$, $N \leq M$, and the devices set is given by $\cal N$. 
\subsubsection{Local Computing Latency}
For the local computing, each device $D_i, \ \forall i \in \cal N$, generates the latent vector via the trained autoencoder ${\cal A}_i$ with their local datasets ${\bf x}_i$. For such autoencoder inferring, the computation delay (also local inferring delay) mainly depends on the device's computing capability and the size of data samples. Denote the number of computation operations to calculate the latent vector with respect to one data sample (also one image) as $c$, the number of data samples loaded by device $D_i$ as $|{\cal D}_i|$, and the CPU frequency of $D_i$ as $f_i$, and accordingly the delay cost of local autoencoder training of $D_i$ is calculated as
\begin{equation}\label{t_local}
t^i_{comp} ={|{\cal D}_i| c}/{f_i}, \ \ \forall i \in \cal N.
\end{equation}

\subsubsection{Wireless Transmission Latency}
After the latent vectors are generated, we assume that the devices upload their vectors and labels to the BS via OFDMA in an asynchronous manner. In the communication model, suppose that $B$ indicates the total bandwidth between the devices and the BS. $P_i$ is the transmission power of $D_i$. 
$h_i$ is the channel gain between $D_i$ and the BS. $N_0$ is the Gaussian noise variance. 
The achievable rate of device $D_i$ can be obtained as
\begin{equation}\label{r_commu}
r^i_{comm}(\mathbf{w}) = {w}_i B {\rm log}_2 \left(1+ \frac{P_i h_i}{ N_0}\right) , \ \forall i \in \cal N,
\end{equation}
where $\mathbf{w}=[{w}_1, {w}_2, ..., {w}_N]$ is the bandwidth allocation ratio vector, ${w}_i \in [0,1]$ indicates the ratio of the wireless bandwidth allocated to $D_i$.
Due to limited bandwidth of the system, we have $0 \leq {w}_{i} \leq 1$ and $\sum_{i=1}^{N} {w}_{i} = 1$.
 
Denotes the size of each data sample of $D_i$ as $s_i$, and the compression ratio is $ \lambda_i $, then the size of the latent vector of each data sample can be obtained as $s_i/\lambda_i$. 
 Therefore, the total size of the output latent vector of $D_i$ is given by $v_i = \frac{|{\cal D}_i| s_i}{\lambda_i}$. Given the uplink data rate in (\ref{r_commu}), the transmission delay between $D_i$ and the BS over uplink is specified as
\begin{equation}\label{t_commu}
t^i_{comm}(\mathbf{w}) = \frac{v_i}{r^i_{comm}(\mathbf{w})} = \frac{v_i}{{w}_i B {\rm log}_2 \left(1+ \frac{P_i h_i}{ N_0}\right)}, 
\end{equation}
where $v_i$ is the data size (number of bits) of the latent vectors transmitted from $D_i$ to the BS over wireless link\footnote{The size of label is ignored in this paper since the label size is relatively small compared to latent vector.}.
For a given compression ratio, we envision that the latent vectors of each device is similar to each other. Hence we assume $v_i=v$.
 
On receiving the latent vectors and labels from the devices, the edge server trains the edge-CNN model accordingly. Hereinafter, the total latency needed before the training of edge-CNN is called \textit{completion latency}, which includes the local computation time and the wireless transmission time, as illustrated in Fig.~\ref{model}(b). Based on (\ref{t_local}) and (\ref{t_commu}), the completion latency of the proposed framework is
\begin{equation}\label{latency}
\begin{aligned}
T(\mathbf{w}) \! &=\! \underset{i \in {\cal N}}{\rm max} \left \{t^i_{comp} +  t^i_{comm}(\mathbf{w}) \right \} \\
\! &=\!\underset{i \in {\cal N}}{\rm max} \left \{\frac{|{\cal D}_i| c}{f_i} \!+\! \frac{v}{{w}_i B {\rm log}_2 \left(1\!+\! \frac{P_i h_i}{ N_0}\right)} \right \}.
\end{aligned}
\end{equation}

It is observed from (\ref{latency}) that the slowest end device will fundamentally limit the total latency $T(\mathbf{w})$ in the proposed framework. This leads to the so-called \textit{straggler’s effect issue}, as detailed in \textbf{Remark~\ref{R2}}.

\begin{remark}
\label{R2}
The edge-CNN training process of the proposed framework is susceptible to the ``straggler effect" issue. The edge-CNN training only progresses as fast as the devices with the slowest computation and communication since the edge server waits for the latent vectors from all devices before edge-CNN training can take place, {as illustrated in Fig.~\ref{model}(c).}
\end{remark}

\section{Joint Latent Vector Uploading Decision and Wireless Bandwidth Allocation}
\label{optimization}
In this section, we formulate the Joint Uploading Decision and Bandwidth Allocation (JUDBA) problem as a weighted-sum cost minimization problem, followed by the problem decomposition and solutions to the JUDBA problem.

\subsection{Problem Formulation}
                                                                                                                                                                                                                                                                                                                                                                                                                                                                                                                                                                                                                                                                                                                                                                                                                                                                                                                                                                                                                                                                                                                                                                                                                                                                                                                                                                                                                                                                                                                                                                                                                                                                                                                                                                                                                                                                                                                                                                                                                                                                                                                                                                                                                                                                                                                                                                                                                                                                                                                                                                                                                                                                                                                                                                                                                                                                                                                                                                                                                                                                                                                                                                                                                                                                                                                                                                                                                                                                                                                                                                                                                                                                                                                 For a given uploading decision $\rho_i$ and uplink bandwidth allocation vector $\bf w$, we define the weighted-sum cost of the $i$-th device ($D_i$) during the edge-CNN training as
\begin{equation}\label{cost}
{\cal O}_i=\alpha_i E_i + (1-\alpha_i) T(\mathbf{w}), \ \ \forall i \in {\cal N},
\end{equation}
in which $\alpha_i$, $0 \le \alpha_i \le 1$, specifies the $i$-th device's preference on energy consumption, and $1-\alpha_i$ specifies the preference on completion time. For example, an IIoT device $D_i$ with short battery life can increase the coefficient $\alpha_i$ so as to save energy at the expense of latency, and vice versa.

Let ${\bf Q} = \{\rho_1, \rho_2, ..., \rho_M \}$ denote the uploading decision vector, we formulate the JUDBA problem as a weighted-sum cost minimization problem, i.e.,                                                                                                                                                                                                                                                                                                                                                                                                                                                                                                                                                                                                                                                                                                                                                                                                                                                                                                                                                                                                                                                                                                                                                                                                                                                                                                                                                                                                                                                                                                                                                                                                                                                                                                                                                                                                                                                                                                                                                                                                                                                                                                                                                                                                                                                                                                                                                                                                                                                                                                                                                                                                                                                                                                                                                                                                                                                                                                                                                                                                                                                                                                                                                                                                                                                                                                                                                                                                                                                                                                                                                                                                                                                                                                                                                                                                                                                                                                                                                                                                                                                                                                                   
\begin{subequations} \label{problem}
\begin{align}
 \rm  & ({\bf P1}):  \notag  \ \underset{\{\mathbf{Q, w}\}}{\rm min}  {\cal O}_i \notag  \\
& {\rm{s}}{\rm{.t}}{\rm{.}}\;\;\;{\mathbf{C1}: \; \sum_{i=1}^{N} {w}_{i} \leq 1,}\\
& \;\;\;\;\;\;\;\;\mathbf{C2}: \; 0 \leq {w}_{i} \leq 1, \\
&  \;\;\;\;\;\;\;\;\mathbf{C3}: \; {\rho}_{i} \in \{0, 1 \},
\\
&  \;\;\;\;\;\;\;\;\mathbf{C4}: \; N \leq M.
\end{align}
\end{subequations}

The constraints in ${\bf P1}$ is explained as
follows: {constraint (\ref{problem}a) indicates that the allocated bandwidth can not exceed the total bandwidth due to the attenuation or losses in practice}. Constraint (\ref{problem}b) gives the range of the bandwidth allocation ratio of each device. Constraint (\ref{problem}c) gives the binary choice of the uploading decision. At last, constraint (\ref{problem}d) indicates that the number of devices uploading the vectors is no larger than the total number of devices.

\subsection{Problem Decomposition}
Through exploiting the structure of the objective function and constraints (i.e., $\mathbf{C1}$-$\mathbf{C4}$) of JUDBA problem, we observe that the JUDBA problem with high complexity can be decomposed into 
two subproblems with separated objective and constraints with the Tammer decomposition method \cite{Tammer}.  Therefore, we first rewrite the JUDBA problem (${\bf P1}$) as
\begin{equation} \label{problem1}
\begin{aligned}
 \rm  (\widetilde{\bf P1}):  & \ \underset{\bf Q}{\rm min} \ \left(\underset{\mathbf{w}}{\rm min} \ {\cal O}_i \right)   \\
&{\rm{s}}{\rm{.t}}{\rm{.}}\;\;\;\mathbf{C1-C4}.
\end{aligned}
\end{equation}

To solve the equivalent problem $\widetilde{\bf P1}$, we decompose  $\widetilde{\bf P1}$ into two subproblems, as highlighted in \textbf{Remark~\ref{R3}}.
\begin{remark} 
\label{R3}
Solving $\widetilde{\bf P1}$ is equivalent to solving two subproblems without changing the optimality\cite{MINLP_01}:
i) Bandwidth allocation subproblem ($\mathbf{P1.1}$) with fixing the uploading decision to trace the latency-learning tradeoff by minimizing the completion time, i.e.,  \begin{equation} \label{BA}
\begin{aligned}
& (\mathbf{P1.1}): \ {\cal O}^*_{i}=\underset{\bf w }{\rm min}\; {\cal O}_{i}    \\
&{\rm{s}}{\rm{.t}}{\rm{.}}\;\;\;\mathbf{C1, C2},
 \end{aligned}
\end{equation}

 ii) Uploading decision subproblem ($\mathbf{P1.2}$) with the optimal bandwidth allocation to trace the energy-learning tradeoff by minimizing the device's energy consumption, i.e., \begin{equation} \label{UP}
\begin{aligned}
& (\mathbf{P1.2}): \ \underset{\bf Q}{\rm min}\;  {\cal O}^*_{i}   \\
& {\rm{s}}{\rm{.t}}{\rm{.}}\;\;\;\mathbf{C3, C4}.
 \end{aligned}
\end{equation}
 
\end{remark}
\subsection{Problem Solutions}
Here, we present our solutions to subproblems $\mathbf{P1.1}$ and $\mathbf{P1.2}$ so as to obtain the solution to the JUDBA problem.

\subsubsection{Solution to $\mathbf{P1.1}$}
Firstly, given a feasible uploading decision vector ${\bf Q}'=\{\rho_1', \rho_2', ..., \rho_M' \}$ that satisfies constraints (\ref{problem}c) and (\ref{problem}d), then the corresponding number of devices uploading the latent vectors can be achieved as $N= \sum_{i=1}^{M} \rho_i'$.
 Here, we introduce {\textbf{Theorem~\ref{Tm1}}} to solve the latency minimization problem.
 
\theoremstyle{Theorem}
\newtheorem{theorem}{\textbf{Theorem}}
\begin{theorem}
\label{Tm1}
The solution to $\mathbf{P1.1}$ is calculated as follows:
\begin{equation}\label{solution}
\begin{aligned}
{w}_i^* & = {\rm arg \underset{\mathbf{w}}{\rm min}} \ T(\mathbf{w}) \\
& = \frac{v}{ B {\rm log}_2 \left(1\!+\! \frac{P_i h_i}{ N_0}\right) \left(T^*-\frac{|{\cal D}_i| c}{f_i}\right)},
\end{aligned}
\end{equation}
where $T^*$ denotes the minimal completion latency that satisfies the following condition:
\begin{equation}\label{condition1}
{\sum_{i=1}^{N} \frac{v}{ B {\rm log}_2 \left(1\!+\! \frac{P_i h_i}{ N_0}\right) \left(T^*- \frac{|{\cal D}_i| c}{f_i}\right)} \leq 1.}
\end{equation}
\end{theorem}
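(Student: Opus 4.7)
The plan is to solve $\mathbf{P1.1}$ by exploiting the min-max structure of $T(\mathbf{w})$: since energy $E_i$ is independent of $\mathbf{w}$, minimizing $\mathcal{O}_i$ over $\mathbf{w}$ reduces to minimizing $T(\mathbf{w}) = \max_{i \in \mathcal{N}}\{t^i_{comp} + t^i_{comm}(\mathbf{w})\}$. I would first observe from (\ref{t_commu}) that $t^i_{comm}(\mathbf{w})$ is strictly decreasing and convex in $w_i$, so the per-device latency $t^i_{comp}+t^i_{comm}(\mathbf{w})$ is strictly decreasing in $w_i$. This immediately implies that constraint $\mathbf{C1}$ must be tight at any optimum, since otherwise one could enlarge any $w_i$ and strictly reduce $T(\mathbf{w})$.

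Next I would establish the equalization principle: at optimality, $t^i_{comp}+t^i_{comm}(w_i^*) = T^*$ for every $i \in \mathcal{N}$. The argument is a standard exchange: suppose for contradiction that some device $j$ attains the maximum $T^*$ while device $k$ has $t^k_{comp}+t^k_{comm}(w_k^*) < T^*$. By continuity and strict monotonicity, there exists $\epsilon>0$ such that transferring $\epsilon$ of bandwidth from $k$ to $j$ (i.e. $w_k \to w_k-\epsilon$, $w_j \to w_j+\epsilon$) strictly decreases $t^j_{comp}+t^j_{comm}$ while keeping $t^k_{comp}+t^k_{comm}$ below $T^*$; repeating this with every device tied at $T^*$ yields a strictly smaller objective, contradicting optimality.

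With equalization in hand, I would solve $t^i_{comp}+t^i_{comm}(w_i^*)=T^*$ for $w_i^*$ using (\ref{t_local}) and (\ref{t_commu}):
\begin{equation*}
\frac{|\mathcal{D}_i|c}{f_i}+\frac{v}{w_i^* B\log_2(1+P_ih_i/N_0)} = T^*,
\end{equation*}
which rearranges directly to the closed form (\ref{solution}). Substituting these $w_i^*$ into the tight budget $\sum_{i=1}^N w_i^* \le 1$ yields the feasibility condition (\ref{condition1}) that pins down $T^*$. Because the left-hand side of (\ref{condition1}) is strictly decreasing in $T^*$ (and ranges from $+\infty$ down to $0$ as $T^*$ grows beyond $\max_i |\mathcal{D}_i|c/f_i$), there is a unique minimal $T^*$ saturating the constraint, and this $T^*$ is indeed the optimum.

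The main obstacle I anticipate is making the exchange argument fully rigorous while ruling out degenerate cases, namely devices for which $T^* \le |\mathcal{D}_i|c/f_i$ would make $w_i^*$ negative or undefined. One must argue that the optimal $T^*$ strictly exceeds $\max_{i\in\mathcal{N}} |\mathcal{D}_i|c/f_i$, which follows because the finite bandwidth budget forces a strictly positive communication time for every participating device. Once this lower bound on $T^*$ is secured, formulas (\ref{solution}) and (\ref{condition1}) are well-defined and deliver the claimed solution.
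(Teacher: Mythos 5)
Your proposal follows essentially the same route as the paper: the core step in both is the equalization principle that at the optimum every participating device finishes local computing and uploading at the same instant $T^*$, after which $w_i^*$ follows by solving $t^i_{comp}+t^i_{comm}(w_i^*)=T^*$ and $T^*$ is pinned down by saturating the bandwidth budget (which the paper locates by bisection search). Your write-up is in fact more careful than the paper's one-paragraph argument --- the explicit exchange argument, the tightness of $\mathbf{C1}$, and the verification that $T^* > \max_{i\in\mathcal{N}} |\mathcal{D}_i|c/f_i$ so that every $w_i^*$ is positive and well-defined are all left implicit there --- so there is nothing to correct.
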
 

\begin{proof}

In the proposed framework, if some device generates the latent vector slower than others, the BS could allocate more wireless bandwidth for these slower devices to accelerate their transmission procedure. As a result, the system completion latency can be shortened. 
Therefore, the optimal solution can be achieved only when all devices finish the raw data compression (i.e., compress the raw images into the latent vectors) and wireless uploading at the same time, i.e., $T(\mathbf{w})=T^*$. 

Substituting (\ref{latency}) into $T(\mathbf{w})=T^*$, we have
\begin{equation} \label{problem1}
 \frac{|{\cal D}_i| c}{f_i}+ \frac{v}{{w}_i^* B {\rm log}_2 \left(1+ \frac{P_i h_i}{ N_0}\right)} = T^*,  
\end{equation}
where $T^*$ meets the condition (\ref{condition1}) and can be achieved using the bisection search, a.k.a. interval halving or binary search~\cite{binary_search}.
As a result, ${w}_i^*$, can be derived as (\ref{solution}).
\end{proof}

To capture the features of ${w}_i^*$, \textbf{Observation~\ref{p1}} is given as follows.

\theoremstyle{Observation}
\newtheorem{observation}{\textbf{Observation}}
\begin{observation}
\label{p1} We observe from (\ref{solution}) that ${w}_i^*$ is inversely proportional to $h_i$ and $f_i$. This indicates that the device with worse channel conditions or weaker computation capabilities will be allocated with more bandwidth. 
\end{observation}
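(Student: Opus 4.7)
The plan is to establish the claim directly from the closed-form expression for $w_i^*$ obtained in Theorem~\ref{Tm1} by a monotonicity inspection of the denominator of (\ref{solution}). The observation is essentially a qualitative statement, so I would organize the argument as two parallel sign computations, followed by a brief treatment of the implicit dependence of $T^*$ on the per-device parameters.

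First I would handle the dependence on the channel gain $h_i$. Treat all other quantities, including the common synchronization time $T^*$ guaranteed by the proof of Theorem~\ref{Tm1}, as held fixed. The factor $\log_2(1 + P_i h_i/N_0)$ is strictly increasing in $h_i$ on $h_i > 0$, while the companion factor $(T^* - |{\cal D}_i| c/f_i)$ does not involve $h_i$; feasibility of $\mathbf{P1.1}$ forces $T^* > |{\cal D}_i|c/f_i$ (otherwise (\ref{solution}) would yield $w_i^* \le 0$), so this companion factor is strictly positive. Consequently the denominator of (\ref{solution}) is strictly increasing in $h_i$, and $\partial w_i^*/\partial h_i < 0$. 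In words, a worse channel forces a larger bandwidth share.

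Second, I would handle the dependence on the CPU frequency $f_i$ in exactly the same way. Now the log factor does not involve $f_i$, while $(T^* - |{\cal D}_i|c/f_i)$ is strictly increasing in $f_i$ because $|{\cal D}_i|c/f_i$ is strictly decreasing in $f_i$. Again the denominator of (\ref{solution}) is strictly increasing in $f_i$, giving $\partial w_i^*/\partial f_i < 0$. A slower device is allocated more bandwidth to compensate for its longer local encoding time.

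The main subtlety — and what I expect to be the trickiest step — is that the ``fixed $T^*$'' assumption above is not quite literal, since $T^*$ itself depends on every $h_j$ and $f_j$ through the saturation of (\ref{condition1}). I would close this gap by implicit differentiation of the equality version of (\ref{condition1}): each summand on the left is strictly decreasing in $T^*$ and strictly decreasing in the corresponding $h_j$ and $f_j$, so an increase in $h_i$ or $f_i$ must be compensated by a decrease in $T^*$ to restore the sum to one, i.e., $\partial T^*/\partial h_i \le 0$ and $\partial T^*/\partial f_i \le 0$. Substituting back into (\ref{solution}), this indirect channel strengthens the direct sign computation for the allocations of the \emph{other} devices, and for device $i$'s own allocation one verifies that the direct effect entering through $\log_2(\cdot)$ or through $1/f_i$ dominates the modest reduction of the $(T^*-|{\cal D}_i|c/f_i)$ factor. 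This preserves the qualitative ``inversely proportional'' statement of \textbf{Observation~\ref{p1}}.
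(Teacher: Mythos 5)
Your proposal is correct, and it is actually more careful than what the paper does: the paper offers no proof of \textbf{Observation~\ref{p1}} at all, simply reading the monotonicity off the closed form (\ref{solution}) with $T^*$ implicitly treated as a constant. Your two sign computations for the denominator (the factor $\log_2(1+P_ih_i/N_0)$ increasing in $h_i$, the factor $T^*-|{\cal D}_i|c/f_i$ increasing in $f_i$, and positivity of the latter from feasibility) reproduce exactly that inspection, and your third step --- recognizing that $T^*$ itself moves when $h_i$ or $f_i$ changes because it is pinned down by saturation of (\ref{condition1}) --- addresses a dependence the paper silently ignores. Your implicit-differentiation argument that $\partial T^*/\partial h_i\le 0$ and $\partial T^*/\partial f_i\le 0$ is sound.

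The one soft spot is the final sentence, where you assert that for device $i$'s own allocation ``the direct effect dominates'' the opposing shrinkage of $T^*-|{\cal D}_i|c/f_i$, but you say ``one verifies'' without verifying. There is a clean way to close this that avoids any quantitative comparison: at the optimum the budget is exhausted, $\sum_{j=1}^N w_j^*=1$. When $h_i$ (or $f_i$) increases, every other device $j\neq i$ sees an unchanged log factor and local-computing term but a strictly smaller $T^*$, hence a strictly smaller denominator in (\ref{solution}) and a strictly larger $w_j^*$. Since the shares still sum to one, $w_i^*=1-\sum_{j\neq i}w_j^*$ must strictly decrease (for $N\ge 2$; for $N=1$ the claim is vacuous since $w_1^*=1$). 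I would substitute this normalization argument for the asserted dominance; with that replacement your proof is complete and strictly stronger than the paper's treatment.
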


\subsubsection{Solution to $\mathbf{P1.2}$}
To achieve $\rho_i^*$, exhaustive search method can be used to explore over all possible uploading decisions. Also, we can introduce heuristic algorithm to find the optimal solution (sometimes local optimal) more efficiently~\cite{MINLP_01}. In this context, we have {\textbf{Observation~\ref{O1}}}.

\begin{observation} 
\label{O1}
Given the $w_i^*$, then $\rho_i^*$ is determined by $\Delta_e^i$. If $\Delta_e^i>0$, i.e., the energy consumption of latent vectors uploading dominates, so we have $\rho_{i}^*=0$ since $E_i$ increases monotonously with $\rho_i$, and vice versa. 
\end{observation}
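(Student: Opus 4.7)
The plan is to verify Observation 1 by examining how the per-device weighted-sum cost ${\cal O}_i$ in (\ref{cost}) depends on the binary decision $\rho_i$ once the optimal bandwidth allocation $w_i^*$ from Theorem 1 has been fixed. The key ingredient is already in hand: equation (\ref{energy_total_new}) factors the energy as $E_i = \rho_i \Delta_e^i + E^i_{U^-}$, so all of the $\rho_i$-dependence of $E_i$ is carried by the single linear term $\rho_i \Delta_e^i$.

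First I would substitute (\ref{energy_total_new}) into (\ref{cost}) to write
\begin{equation*}
{\cal O}_i \;=\; \alpha_i \rho_i \Delta_e^i \;+\; \alpha_i E^i_{U^-} \;+\; (1-\alpha_i)\, T(\mathbf{w}^*),
\end{equation*}
so that only the first term varies with $\rho_i$. Since $\rho_i\in\{0,1\}$, I would then just compare the two feasible values:
\begin{equation*}
{\cal O}_i\big|_{\rho_i=1} - {\cal O}_i\big|_{\rho_i=0} \;=\; \alpha_i \Delta_e^i.
\end{equation*}
With $\alpha_i\in[0,1]$, the sign of this difference is governed entirely by $\Delta_e^i$. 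Hence, when $\Delta_e^i>0$ (the transmission energy dominates the fine-tuning energy) the minimizer is $\rho_i^* = 0$; when $\Delta_e^i<0$ the minimizer is $\rho_i^* = 1$; and when $\Delta_e^i = 0$ either choice is optimal. This monotone dependence of ${\cal O}_i$ on $\rho_i$ is exactly what the observation asserts.

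The main subtlety I would flag is that, strictly speaking, $T(\mathbf{w}^*)$ is not constant in $\rho_i$: switching $\rho_i$ changes the active device set $\cal N$ and, through (\ref{solution}) and the straggler condition (\ref{condition1}), alters the feasible bandwidth vector and the minimal completion latency $T^*$. The "given the $w_i^*$" clause in the observation is precisely what lets me hold $T(\mathbf{w}^*)$ fixed while comparing the two values of $\rho_i$, reducing the argument to the linear-in-$\rho_i$ energy term above. I would state this assumption explicitly and note that it is exactly this decoupling that justifies the heuristic of assigning $\rho_i^*$ by the sign of $\Delta_e^i$ when searching over $\bf Q$ in subproblem $\mathbf{P1.2}$, which is the only step beyond routine substitution.
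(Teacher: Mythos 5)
Your proposal is correct and follows essentially the same reasoning the paper uses: since $E_i = \rho_i \Delta_e^i + E^i_{U^-}$ is linear in the binary variable $\rho_i$ with slope $\Delta_e^i$, the sign of $\Delta_e^i$ alone determines whether $\rho_i^*=0$ or $\rho_i^*=1$ once the bandwidth allocation (and hence $T(\mathbf{w}^*)$) is held fixed. Your flagged subtlety --- that $T(\mathbf{w}^*)$ is not truly constant in $\rho_i$ because changing the active set $\cal N$ alters the feasible bandwidth vector via (\ref{solution}) and (\ref{condition1}) --- is a legitimate gap that the paper itself glosses over by conditioning on ``given the $w_i^*$,'' so your explicit statement of that assumption is a strict improvement in rigor over the paper's one-line justification.
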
 

To this end, ${\bf Q}^* = \{\rho_1^*, \rho_2^*, ..., \rho_M^* \}$,  the optimal solution for the JUDBA problem is achieved as $\{{\bf Q}^*, {\bf w}^* \}$, where ${\bf w}^*$ is obtained in (\ref{solution}) by setting $\rho_i=\rho_i^*$. Note that $\mathbf{P1}$ is a mixed-integer nonlinear programming (MINLP) problem and solving $\bf P1$ usually converges slowly and requires forbidding complexity for \textit{real-time implementation} in the IIoT-densely-deployment scenario since too many iterations are required.
 Interestingly, we can explore the advantages of {deep learning} to decrease the involved complexity for the ease of implementation in practice~\cite{YB_TMC}. For example, with the offline-trained \textit{multi-task learning} model, the BS can directly infer $\{{\bf Q}^*, {\bf w}^* \}$ together with high accuracy through feedforward calculation without iterations.
 
\section{Experiments Results}
\label{result}
In this section, we conduct experiments to demonstrate the CNN learning performance under different compression ratio. Then we evaluate the performance of the proposed framework.

\subsection{CNN Training and Inference}
In the experiments, we evaluate the vanilla CNN model performance with the ImageNet dataset~\cite{GoogleNet}, which contains about 13,000 images each resized to a dimension 256$\times$256$\times$3, spanning 10 classes, and the training testing ratio is 7:3. The training and testing of the autoencoders and CNN classifier were implemented using Keras on TensorFlow backend, running on a NVIDIA Tesla P100-PCIE-16GB GPU.

\begin{table}[thb]
\centering
  \captionsetup{font={footnotesize}} 
\caption{{CNN performance versus compression ratio on ImageNet dataset}}
 \begin{tabular}{|*{6}{c|} }
 \hline
 Compression & Inference & Testing & Training & Model \\  
 
 ratio ($\lambda$) &  accuracy (\%) & time (s) & time (s) & parameters ($10^3$)\\  
 \hline\hline
 1 & 83 & 1.92  & 21848.81 & 2798.25\\  

 \hline
4 & 77 & 0.63  & 6249.24 & 619.18 \\  
 
 \hline
 8 & 75 & 0.42  & 3555.93 & 272.11  \\ 
 
 \hline
 16 & 74 & 0.33  & 2208.99 & 131.75  \\  
  \hline
{32} & {69} & {0.33} & {2272.93} & {34.31}  \\ 
 \hline
 {64} & {64} & {0.32} & {1613.18} & {33.45}  \\ 
  \hline
\end{tabular}
\label{training_result}
\end{table}

The experiment results are presented in Table~\ref{training_result}, where the data in the table for compression ratio ``$1$" represents the performance obtained when the \textit{uncompressed} ImageNet images were used to train and test the CNN model, denoted as the `baseline' that other cases are compared to. {It can be observed that for a compression ratio of $32$, the achieved prediction accuracy is up to almost $85\%$ of the baseline while the size of model parameters is only $1\%$ of the baseline.} 
This indicates that with the autoencoder, the training and testing overhead of the CNN model in the proposed edge-CNN framework can be significantly reduced without much degradation on the inference accuracy. 
\subsection{Performance Evaluation} 

In this part, numerical results are provided to demonstrate the performance of the proposed edge-CNN framework. 
In the simulations, some critical parameters are given in Table~\ref{Parameters}. Without loss of generality, the simulation is performed for an average of $10^3$ runs. 
\begin{table}[thb]  
  \captionsetup{font={footnotesize}} 
\caption{ Critical parameters and values } 
\label{Parameters}
\centering  
\begin{tabular}{|l | l|}  
\hline
{Parameters} & {Value} \\
\hline  \hline
Channel bandwidth ($B$) & $10$ MHz \\
\hline 
 CPU frequency of device ($f$) &  $0.1 \sim 1.0$ GHz \\
 \hline
 Device's transmission power ($P_t$) & $0.3$ W \\
  \hline 
  Device's idle power ($P_I$) & $0.1$ W \\
 \hline
 White noise power ($N_0$) & $7.9\times10^{-13}$ \\
  \hline
  Image data size ($s$) & $800$ kbits \\
    \hline
  Preference coefficient on energy consumption ($\alpha$) & $0.5$ \\
  \hline
\end{tabular}  
\end{table}  

\subsubsection{Weighted-sum System Cost}
The attainable performance is characterized both by the total weighted-sum cost defined in (\ref{cost}). The performance metrics are evaluated for our proposed framework  with two benchmarks:
\begin{itemize}
\item ``\textit{Fully-uploading}": All the devices participate in the edge-CNN training (i.e., $\rho_i=1, \ \forall i \in \cal M$) and the whole bandwidth is allocated equally to the devices.
\item ``\textit{Randomly-uploading}": Each device is with a random value of $\rho_i$  (i.e., $\rho_i=\{0,1\}, \ \forall i \in \cal M$) and the bandwidth is allocated equally to the devices.
\end{itemize}   

\begin{figure}[t] 
\centering
\captionsetup{font={footnotesize }}
\subfigure[]{
\includegraphics[width=2.35in,height=1.88in]{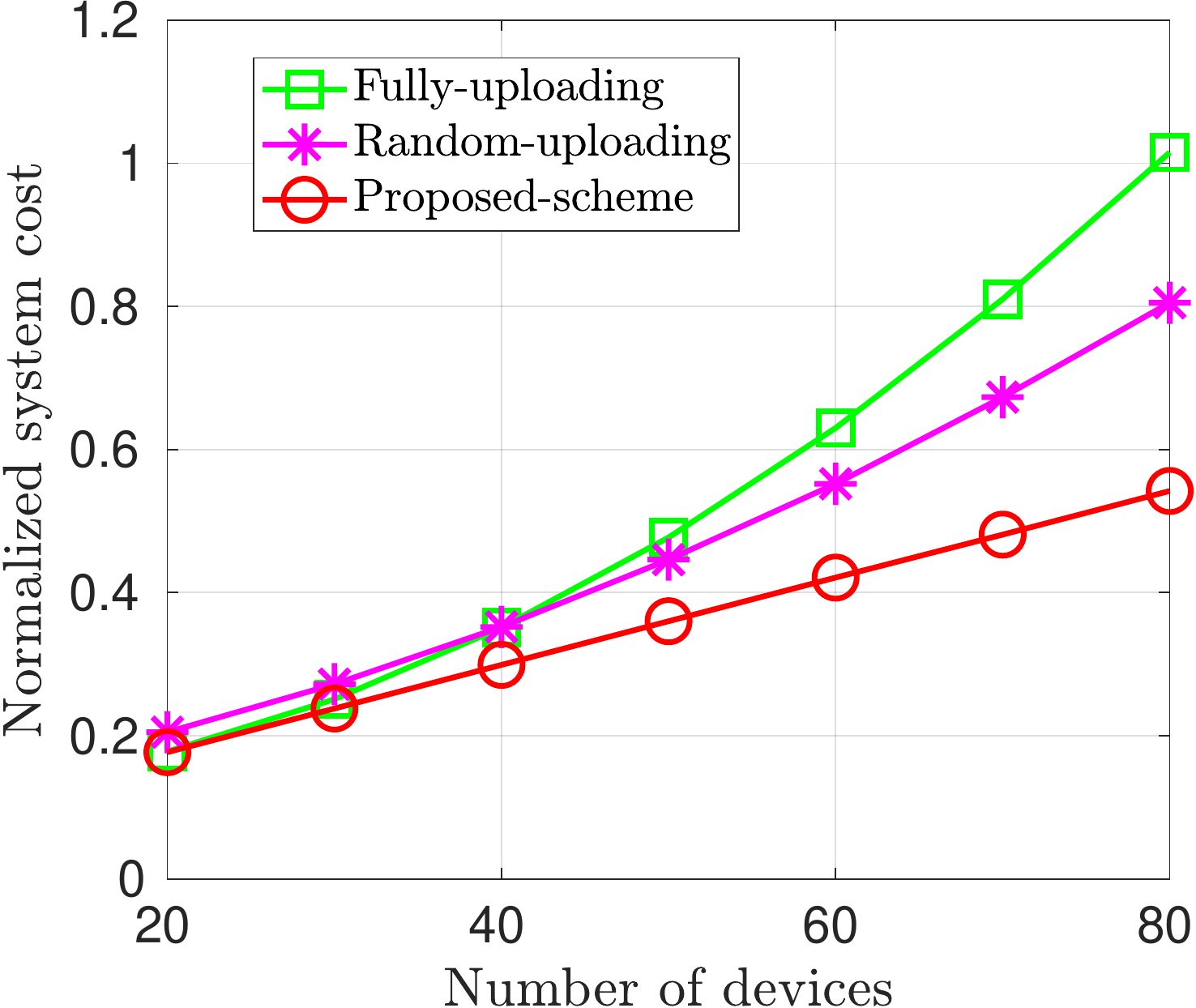}} 
\hspace{0.5in}
\subfigure[]{
\includegraphics[width=2.35in,height=1.88in]{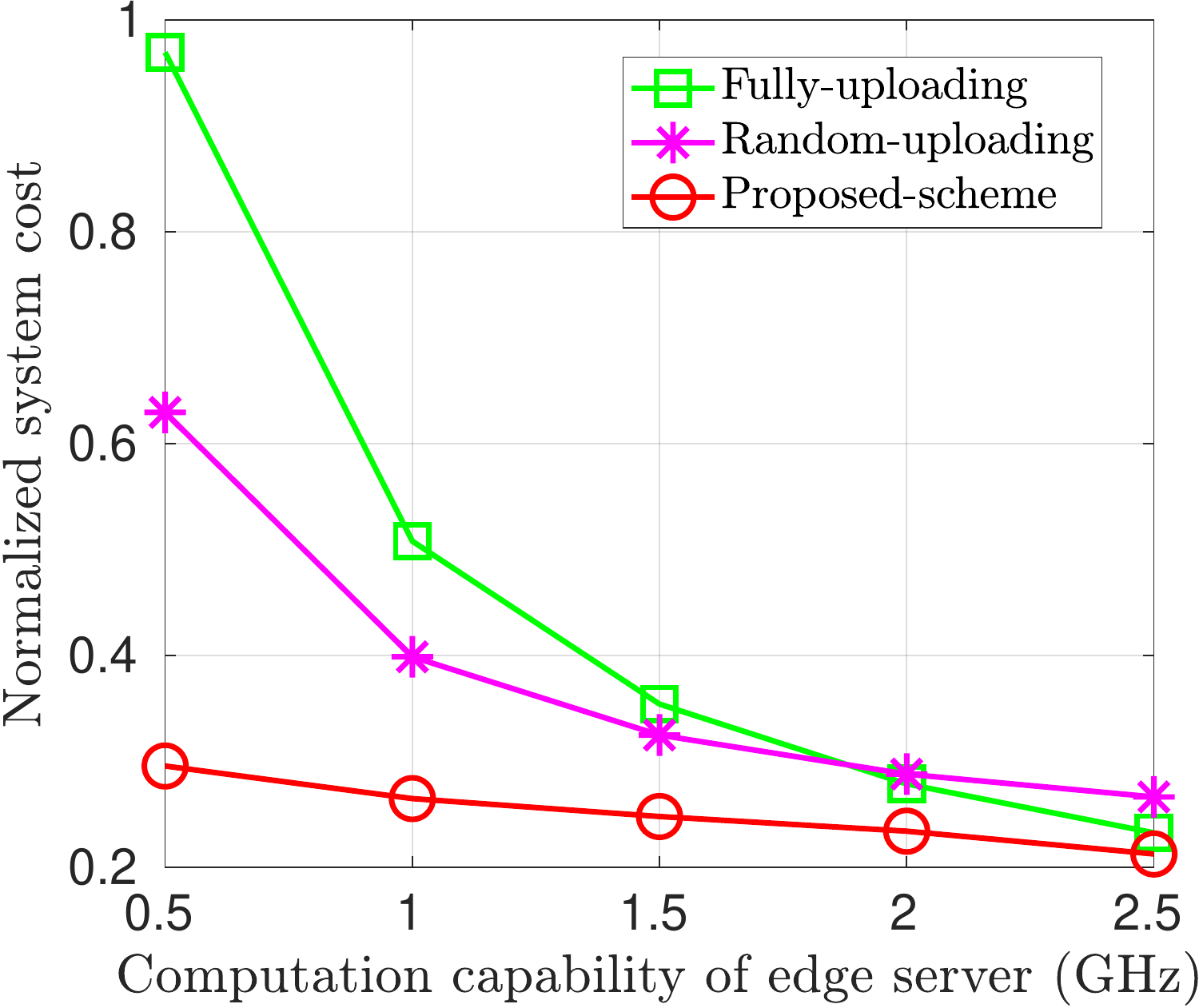}}
\caption{{The normalized weighted-sum cost (defined in (\ref{cost})) versus the number of devices ($N$) is shown in (a). The normalized system cost versus the edge server's computation capability ($F$) is shown in (b), where the total number of devices is $30$. The compression ratio ($\lambda$) is $4$ in (a) and (b).}}
\label{cost}
\end{figure}

Fig.~\ref{cost}(a) presents the impact of the total number of devices ($M$) on the normalized system cost, where $F=2.5$ GHz. We observe that the system cost of three schemes increases by $M$. For example, when $M \le 40$, the ``Fully-uploading" outperforms ``Randomly-uploading" while an opposite trend appears when $M > 40$. This is because the devices are prone to be allocated more bandwidth to participate in the edge-CNN training when $M$ is small. As $M$ keeps increasing, the limited bandwidth is not sufficient to meet the uploading demands of all devices. The normalized system cost with respect to the increasing of computation capability of the edge server (denoted as $F$) is given in Fig.~\ref{cost}(b), where $M=30$. In contrast to Fig.~\ref{cost}(a), we observe that the system cost of three schemes decreases with $F$. As $F$ increases , the system cost of ``Fully-uploading" significantly decreases, and outperforms ``Randomly-uploading" when $F \geq 2$ GHz since the edge-CNN training time is decreased. From Figs.~\ref{cost}(a)-(b), we observe that when the compression ratio is given, the proposed scheme always outperforms the two benchmarks by tracing the best computing-communication latency tradeoff.

\subsubsection{Energy-Latency-and-Learning Tradeoffs}
{The effect of the compression ratio ($\lambda$) on the inference error rate and devices' normalized weighted-sum cost is shown in Fig.~\ref{tradeoff}(a).} As $\lambda$ increases, the inference error is increased since some features are lost in the compressed latent vectors, which is harmful for the CNN training. On another note, with a larger $\lambda$, the completion latency and energy raised can be decreased since the burdens of wireless channel can be relieved. This is reflected as the energy-latency-and-learning tradeoffs, which indicate that the compression ratio is a critical parameter and should be chosen carefully  in the proposed framework.

{Furthermore, the choice of $\lambda$ to balance the tradeoff of the proposed framework is investigated in Fig.~\ref{tradeoff}(b), where the efficiency is defined as $\eta=\frac{\rm Inference \ accuracy}{\rm Weighted \ sum \ cost}$. It is observed that $\eta$ first increases significantly as the increase of $\lambda$ and then reaches the top when $\lambda=32$. This is because compared to the uncompressed case ($\lambda=1$), the size of the compressed model is reduced greatly without significant loss of inference accuracy when $\lambda=32$. As $\lambda$ keeps rising, the efficiency starts to decrease, which is due to the fact that the gain brought by the model compression is not significant enough. Therefore, to trace an optimal tradeoff, the compression ratio is prone to be set as $32$ if the inference accuracy meets the requirement.}

\begin{figure}[t] 
\centering
\captionsetup{font={footnotesize }}
\subfigure[]{
\includegraphics[width=2.86in,height=1.99in]{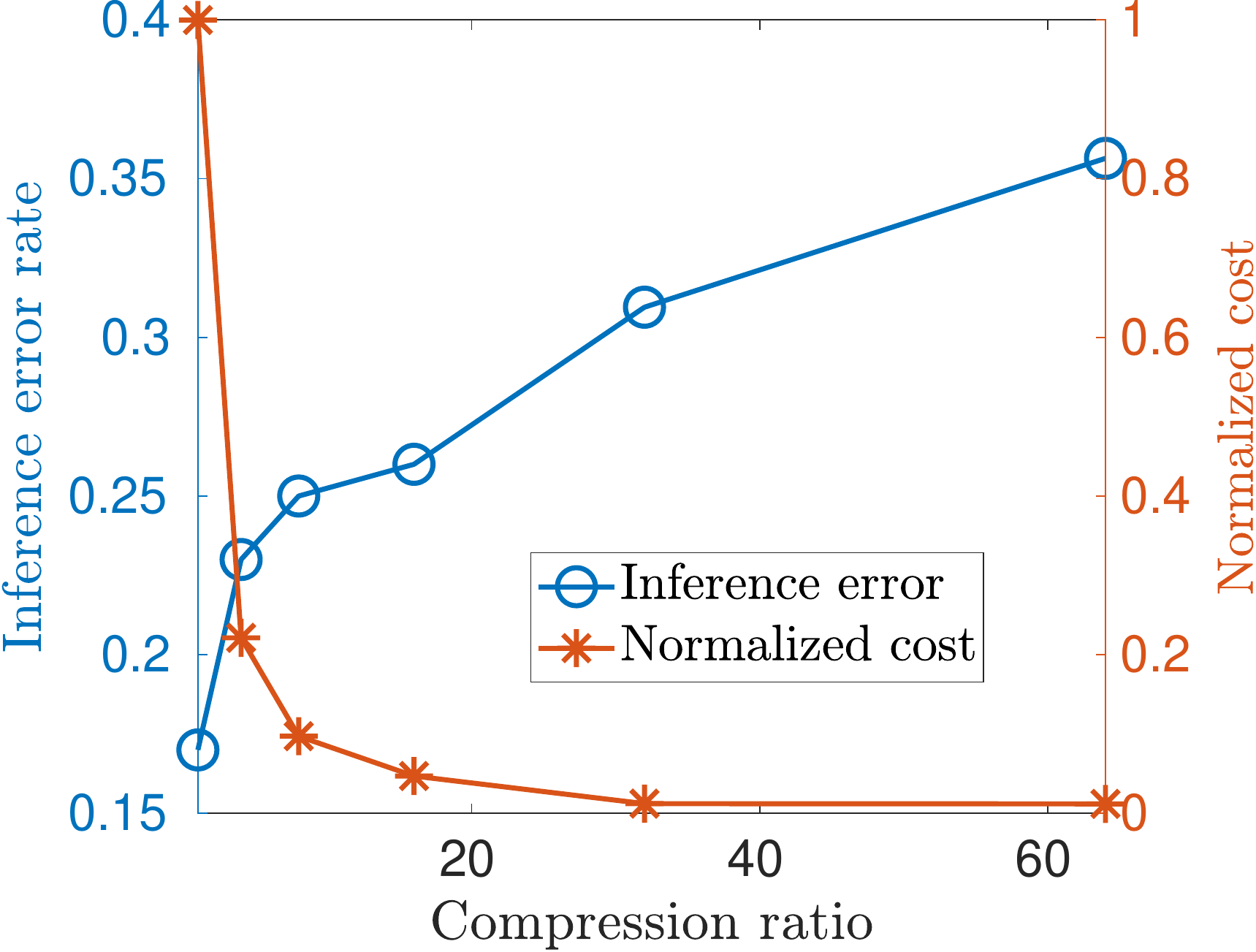}} 
\hspace{0.05in}
\subfigure[]{
\includegraphics[width=2.485in,height=1.99in]{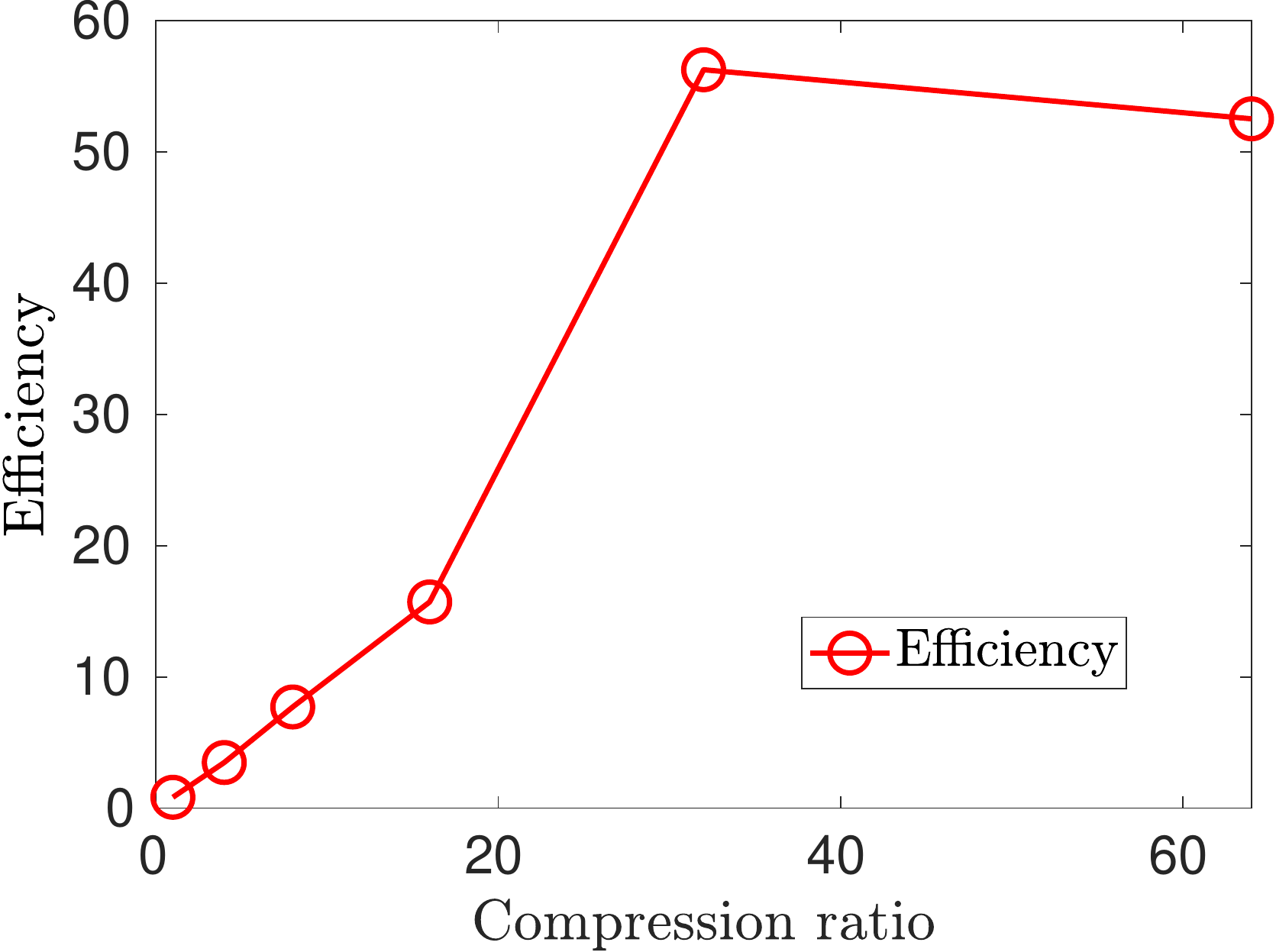}}
\caption{{Affect of compression ratio on the inference error rate and the normalized weighted-sum cost is shown in (a), and the effect of compression ratio on the efficiency is shown in (b),  where $\alpha=0.5$.}}
\label{tradeoff}
\end{figure}

\section{Concluding Remarks}
\label{concusion}
In this article, we developed a TL-empowered edge-CNN framework enabling the image classification with privacy-preserving over 5G IIoT edge networks. By revealing the energy-latency-and-learning tradeoffs, we formulated an optimization problem that jointly considers device uploading decision and wireless bandwidth allocation to minimize the weighted-sum cost. We derive the solution by decomposing the original problem into an uploading decision subproblem with a fixed bandwidth allocation and a wireless bandwidth allocation subproblem that minimizes the completion latency. 

Although the proposed framework has similar characteristics such as taking advantage of data locality from devices as in federated learning, however, this work outperforms the federated learning in three folds: 1) CNN training is performed at the edge and is not subject to the constraints of devices. 2) Autoencoder is locally trained independently without edge server involvement. 3) Privacy-preserving is obtained by transmitting latent vectors without additional cost of encryption.


\end{document}